\newtheorem{theorem}{Theorem}
\def\blfootnote{\xdef\@thefnmark{}\@footnotetext}
\def\BibTeX{{\rm B\kern-.05em{\sc i\kern-.025em b}\kern-.08em
		T\kern-.1667em\lower.7ex\hbox{E}\kern-.125emX}}
\begin{document}
	\title{\LARGE RIS-Aided Backscattering Tag-to-Tag Networks: Performance Analysis}
	
	\author{
		\IEEEauthorblockN{
			Masoud Kaveh\IEEEauthorrefmark{2},
			Farshad Rostami Ghadi\IEEEauthorrefmark{4},
			Zheng Yan\IEEEauthorrefmark{3}, and
			Riku J\"antti\IEEEauthorrefmark{2}}
		\IEEEauthorblockA{\IEEEauthorrefmark{2}Department of Information and Communications Engineering, Aalto University, Espoo, Finland}
		\IEEEauthorblockA{\IEEEauthorrefmark{4}Department of Electronic and Electrical Engineering, University College London, London, UK}
		\IEEEauthorblockA{\IEEEauthorrefmark{3}School of Cyber Engineering, Xidian University, Xi'an, China}
		
	}
	\maketitle
\begin{abstract}
Backscattering tag-to-tag networks (BTTNs) represent a passive radio frequency identification (RFID) system that enables direct communication between tags within an external radio frequency (RF) field. 
However, low spectral efficiency and short-range communication capabilities, along with the ultra-low power nature of the tags, create significant challenges for reliable and practical applications of BTTNs.
To address these challenges, this paper introduces integrating an indoor reconfigurable intelligent surface (RIS) into BTTN and studying RIS's impact on the system's performance. 
To that end, we first derive compact analytical expressions of the probability density function (PDF) and cumulative distribution function (CDF) for the received signal-to-noise ratio (SNR) at the receiver tag by exploiting the moment matching technique. Then, based on the derived PDF and CDF, we further derive analytical expressions of outage probability (OP), bit error rate (BER), and average capacity (AC) rate. Eventually, the Monte Carlo simulation is used to validate the accuracy of the analytical results, revealing that utilizing RIS can greatly improve the performance of BTTNs in terms of AC, BER, OP, and coverage region relative to traditional BTTNs setups that do not incorporate RIS.
\end{abstract}
\vspace{6pt}
\begin{IEEEkeywords}
Backscattering tag-to-tag networks, reconfigurable intelligent surfaces, performance analysis.
\end{IEEEkeywords}

\section{Introduction}
\IEEEPARstart{I}{n} today’s increasingly connected world, the demand for energy-efficient communication technologies and zero energy devices is growing, particularly in applications where regular battery replacement or recharging is impractical \cite{zero6G1}. 
Radio frequency identification (RFID) technology stands out as a prime example of this innovation. RFID systems harness electromagnetic fields to identify and track tags attached to objects, without the need for an internal power source. The tags, powered by energy harvested from the reader’s signal, respond by backscattering a signal which is modulated with the necessary information \cite{BCFAS1}. 
%
A significant innovation in this arena is backscatter communication (BC), a method that enables devices to communicate by reflecting existing RF signals instead of producing new ones. This approach greatly reduces power usage, making it perfectly suited for passive RFID systems \cite{BCRFID1,BCArbit1}. 
While RFID adn BC technologies offer numerous benefits, these systems exhibit several drawbacks, especially in facilitating direct communication between tags. 

Typically, conventional BC and RFID setups only support interactions between a reader and a tag, which constrains the systems’ scalability and adaptability in various applications \cite{BCSI1}, 
where there is a growing need for tags to communicate directly with each other 
without the constant intervention of a central reader
\cite{BTTN-design-1}.
Backscattering tag-to-tag networks (BTTNs) are an innovative subset of passive RFID systems that facilitate direct tag-to-tag communication within an external RF field \cite{design-BTTN-2}. The basic setup of a BTTN consists of a \textit{talker tag} (TT), a \textit{listener tag} (LT), and an external RF source. 
In this setup, TT modulates the RF field by altering its load impedance, which changes the characteristics of the backscattered signal directed towards the LT. The LT, in turn, demodulates the received backscattered signal using its envelope detector. Both tags are equipped with simple energy harvesting circuits, which allow them to charge their batteries by converting energy from the RF field, thus maintaining their operational capabilities without the need for external power sources \cite{design-BTTN-3}.


Current BTTNs face multiple challenges that affect their efficiency and reliability. A primary constraint is the inherently low spectral efficiency of BC \cite{BTTN-performance-1}. The process of modulating and reflecting RF signals restricts the bandwidth available for data transmission, consequently limiting the data rates achievable within BTTNs. This issue is further exacerbated in environments where both the transmitter and receiver are passive tags, which possess significantly less processing power compared to conventional readers.
Furthermore, the operational range of BTTNs is notably limited due to the passive nature of the system. Tags rely entirely on backscattered signals for communication, which inherently weakens with distance, thus curtailing the effective communication range. This limitation poses substantial challenges in applications requiring longer-range interactions \cite{passive-BTTN-1}.
Additionally, the ultra-low power requirements of BTTNs compound these challenges. As the networks depend on passive tags that harvest energy from incidental RF signals, minimizing power consumption is crucial. It not only prolongs the operational lifespan of the tags but also ensures their continuous functionality. However, the passive operation and resource constraints of the tags make it difficult to compensate for the aforementioned spectral and range limitations, thereby demanding innovative solutions to enhance system performance.
To address these challenges, this paper proposes the integration of an indoor reconfigurable intelligent surface (RIS) into BTTNs and examines its impact on system performance. The contributions of this paper are as follows:
\begin{itemize}
    \item This paper is the first to explore the impact of RIS on the performance of BTTNs, analyzing how various system parameters in RIS-aided BTTNs affect system performance.
    \item To that end, we derive compact closed-form expressions for the probability density function (PDF) and cumulative distribution function (CDF) of the received signal-to-noise ratio (SNR) at LT, using the moment matching technique. This approach enables us to develop a simple yet insightful analytical framework that aptly describes the behavior of this complex system model.
    Then, we establish comprehensive analytical frameworks for average capacity (AC) rate, outage probability (OP), and bit error rate (BER) based on the derived PDF and CDF.
    \item The accuracy of our analytical models is validated through Monte Carlo simulations. The results affirm that the use of RIS significantly enhances various performance metrics in BTTNs compared to conventional settings without RIS.
\end{itemize}

The remainder of this paper is structured as follows. Section \ref{sec_sys} details the system and channel models for RIS-aided BTTNs. Section \ref{sec_performance} discusses the statistical analysis and elaborates on the derivation of closed-form expressions utilized in this study. Section \ref{sec_sim} presents our experimental setup and the corresponding results. Finally, Section \ref{sec_conclusion} summarizes the conclusions drawn from this paper.

\section{System and Channel Models} \label{sec_sys}
\begin{figure}[t]
    \centering    \includegraphics[width=0.43\textwidth]{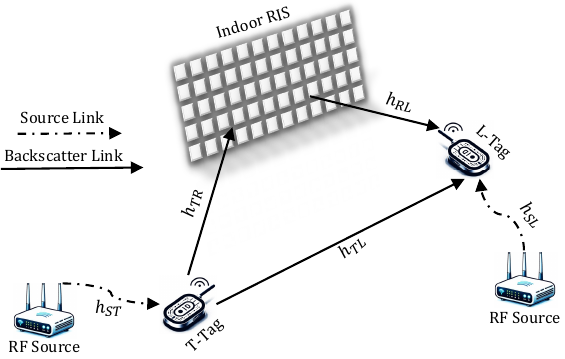}
    \caption{System model: A RIS-aided BTTN.}
    \label{fig:sysmodel}
\end{figure}
Fig. \ref{fig:sysmodel} illustrates the system model for a RIS-aided BTTN. In this setup, TT and LT function as semi-passive backscatter devices, powered by continuous wave carrier signals emitted by RF sources in an indoor setting. TT’s objective is to transmit its messages to LT, facilitated by an indoor RIS equipped with \emph{N} reflective elements. In standard BTTN operations, TT alters its impedance, affecting the tag antenna’s reflection coefficient and thus encoding the information within the backscattered signal, which is subsequently captured by LT \cite{BTTN-CSI-1}.
The signal LT receives is a mix of the direct, unmodulated signal from the RF source and the modulated backscattered signal from TT, via both direct and RIS-assisted pathways.
For simplicity, and assuming no loss of generality, it is noted that the RF source transmits unmodulated carriers, allowing both tags to apply cancellation techniques to lessen interference stemming from the source’s link \cite{Riku_DirPathInterf, Riku_IEEERFID1}.
It is presumed that the RIS is aware of the cascade link’s channel state information (CSI), enabling it to optimize the phase shifting coefficients of its elements to maximize the SNR at LT
\cite{BjornsonCSI, RIS-PLS-SG}. 
We also assume that TT and LT are equipped with a single antenna for simplicity. Therefore, the received signal at TT can be given as follows.
\begin{align} 
y_\text{T}=\sqrt{P_\text{s}}h_\text{ST}+n_\text{T},
\end{align} where $P_\text{s}$ is the power of the signal emitted by the RF source, $h_\text{ST}$ is the channel coefficient between source and TT, and $n_\text{T}$ denotes the additive white Gaussian noise (AWGN) at TT with zero mean and variances $\sigma^2_\text{T}$. Since the noise power caused by a tag's antenna is considerably smaller than the received signal from the source, we neglect it in the rest of this paper \cite{ref57}. 
Therefore, the backscattered signal received at LT can be expressed as:
\begin{align} \label{eq-LT-sig}
y_\text{L} = \lambda_\text{T} S(t) \sqrt{P_\text{s}} \left( h_\text{ST} h_\text{TL} + h_\text{ST} \mathbf{H}_\mathrm{RL}^{\mathcal{T}} \mathbf{\Phi}  \mathbf{H}_\mathrm{TR} \right) +n_\text{L} ,
\end{align}
in which  
\( \lambda_\text{T} \) denotes the backscattering coefficient of TT,
    \( h_\text{TL} \) is the channel coefficient between TT and LT,  $n_\text{L}$ shows the noise at LT with zero mean and variances $\sigma^2_\text{T}$,
$S(t)$ defines the information signal backscattered from TT with a unit power, $\mathbf{H}^{\mathcal{T}}$ represents the transpose of matrix $\mathbf{H}$, $\mathbf{H}_\mathrm{TR}$ is the TT-to-RIS channel coefficient, and $\mathbf{H}_\mathrm{RL}$ is the RIS-to-LT channel coefficient. 
The term $\mathbf{\Phi}$ represents the adjustable phase shift matrix of RIS for maximizing the received signal power at LT, which can be defined as $\mathbf{\Phi}=\text{diag}\left(\left[\mathrm{e}^{j\phi_1}, \mathrm{e}^{j\phi_2},...,\mathrm{e}^{j\phi_N}\right]\right)$. Therefore, the RIS-aided channel coefficients like $\mathbf{H}_\mathrm{TR}$ and $\mathbf{H}_\mathrm{RL}$ contain the \emph{N} channel coefficients from TT to RIS and from RIS to LT as $\mathbf{H}_\mathrm{TR}=d_\mathrm{TR}^{-\chi}.\left[h_\mathrm{TR_1}\mathrm{e}^{-j\delta_1}, h_\mathrm{TR_2}\mathrm{e}^{-j\delta_2},..., h_\mathrm{TR_N}\mathrm{e}^{-j\delta_N}\right]$ and $\mathbf{H}_{RL }=d_\mathrm{RL}^{-\chi}.\left[h_\mathrm{RL_1}\mathrm{e}^{-j\zeta_{1}}, h_\mathrm{RL_2}\mathrm{e}^{-j\zeta_{2}},..., h_\mathrm{RL_N}\mathrm{e}^{-j\zeta_{N}}\right]$, where $j=\sqrt{-1}$, and $d_\mathrm{{TR}}$ and $d_\mathrm{{RL}}$ denote the distance between TT and RIS and the distance between RIS and LT, respectively \cite{V2V-RIS-TIV}. 
The term $\chi$ indicates the path-loss exponent, terms $h_\mathrm{TR_n}$ and $h_\mathrm{RL_n}$ are the amplitudes of the corresponding channel coefficients, and the terms $\mathrm{e}^{-j\delta_n}$ and $\mathrm{e}^{-j\zeta_{n}}$ denote the phase of the respective links for $n\in\left\{1,2,...,  N\right\}$. 
As there may be obstructions impeding the direct link in BTTNs, we also assume that all links follow Rayleigh fading distribution. This received signal is then processed by LT's demodulator circuit to extract the information sent by TT.
If LT needs to communicate with TT, the process is similar but in reverse. LT modulates its impedance to backscatter the RF signal, and TT receives this modulated backscattered signal.

\section{Performance Analysis} \label{sec_performance}
In this section, we initially perform a statistical analysis of the system model under study. Subsequently, we develop closed-form expressions for the PDF and CDF of the received SNR at LT. Following this, we formulate concise analytical frameworks for OP, BER, and AC.

\subsection{Statistical Analysis} \label{sec:SNR}
This section presents an analysis of the received SNR at LT, followed by the derivation of concise analytical expressions for the PDF and CDF, based on the evaluated SNR and moment matching technique.
According to \eqref{eq-LT-sig}, the instantaneous SNR at LT can be determined as
\newcommand\myeq{\mathrel{\stackrel{\makebox[0pt]{\mbox{\normalfont\small def}}}{=}}}
\newcommand\myapprox{\mathrel{\stackrel{\makebox[0pt]{\mbox{\normalfont\footnotesize (a)}}}{\approx}}}
\begin{align} 
&\gamma_L=\frac{{\left|\sqrt{P_\text{s}} \left( h_\text{ST} h_\text{TL} + h_\text{ST} \mathbf{H}_\mathrm{RL}^{\mathcal{T}} \mathbf{\Phi}  \mathbf{H}_\mathrm{TR} \right)\right|}^2}{n_\text{L}}  \\ 
&\hspace{-0.2cm}\approx\frac{P_\text{s}|h_\mathrm{ST}|^2|h_\mathrm{TL}|^2}{{d}^\chi_\mathrm{ST}{d}^\chi_\mathrm{TL}{\sigma}^2_\mathrm{L}}
\hspace{-2pt} +\hspace{-2pt}
\frac{P_\mathrm{s}|h_\mathrm{ST}|^2\left|\sum_{n=1}^{N}h_\mathrm{TR_n} h_\mathrm{RL_n} \mathrm{e}^{j\Psi_n}\right|^2}{{d}^\chi_\mathrm{ST}{d}^\chi_\mathrm{TR}{d}^\chi_{RL}{\sigma}^2_\mathrm{L}}  \\ 
&\overset{(a)}{=} \bar{\gamma}_\text{x} \underbrace{\left|h_\mathrm{ST}\right|^2}_{X_1}  \underbrace{\left|h_\mathrm{TL}\right|^2}_{X_2} + \bar{\gamma}_\text{y} \underbrace{\left|h_\mathrm{ST}\right|^2}_{Y_1} \underbrace{\left|\sum_{n=1}^{N} h_\mathrm{TR_n} h_\mathrm{RL_n} \right|^2} _{Y_2} ,\label{gamma_Reader_2}
\end{align}
where $(a)$ results from ideal phase shifting in the RIS \cite{ref41,ref3941}, $\Psi_n= \phi_n-\delta_n-\zeta_{n}$, $\bar{\gamma}_\text{x}= \frac{\gamma_0}{{d}^\chi_\mathrm{ST}{d}^\chi_\mathrm{TL}}$, $\bar{\gamma}_\text{y}=\frac{\gamma_0}{{d}^\chi_\mathrm{ST}{d}^\chi_\mathrm{TR}{d}^\chi_{RL}}$, and $\gamma_0 = \frac{P_\text{s}}{\sigma^2_\text{L}}$. 
$X_1$ and $X_2$ are exponentially distributed random variables (RVs) with $\mathbb{E}[X_1]=\alpha$, $\mathbb{V}(X_1)=\alpha^2$, $\mathbb{E}[X_2]=\beta$, and $\mathbb{V}(X_2)=\beta^2$, where $\mathbb{E}[\cdot]$ and $\mathbb{V}[\cdot]$ denote the mean and variance, respectively.
The phase information in tags can be measured by multi-phase probing (MPP)-based technique as discribed in \cite{BTTN-CSI-1}.
By defining $X = X_1 X_2$ as the product of independent direct and source links, the mean and variance of $X$ are calculated as $\mathbb{E}[X] = \alpha\beta$ and $\mathbb{V}(X) = 3\alpha^2\beta^2$, respectively, according to the following equations. 
\begin{align} \label{mean-pro-fromula}
    \mathbb{E}[X_1 X_2]= \mathbb{E}[X_1] \mathbb{E}[X_2],
\end{align}
\begin{align} \label{var-pro-fromula}
    \mathbb{V}(X_1 X_2) \hspace{-2pt}= \hspace{-2pt} \mathbb{V}(X_1)\mathbb{V}(X_2) \hspace{-2pt} + \hspace{-2pt} \mathbb{V}(X_1) (\mathbb{E}[X_2])^2 \hspace{-2pt} + \hspace{-2pt} \mathbb{V}(X_2) (\mathbb{E}[X_1])^2.
\end{align}
Therefore, the RV $\gamma_\text{x}$ representing the received SNR at LT from the direct link, $\gamma_\text{x}=\bar{\gamma}_\text{x} X$ has mean and variance as: 
\begin{align}
    \mathbb{E}[\gamma_\text{x}] &= \bar{\gamma}_\text{x} \alpha\beta, \\    \mathbb{V}(\gamma_\text{x}) &= 3\bar{\gamma}^2_\text{x} \alpha^2\beta^2.
 \end{align}
 
For the RIS-aided part of the received SNR, $Y_1$ has the same mean and variance values as $X_1$. On the other hand, $Y=\bar{\gamma}_y Y_2$ follows a Gamma distribution with PDF and CDF as \cite{Gamma_dist_varmean,Gamma_dist_sum}:
\begin{align}
   f_{Y}(y) &= \frac{y^{\frac{Nk'-2}{2}} \mathrm{e}^{-\frac{\sqrt{y}}{\theta'}}}{2\theta'^{Nk'} \Gamma(Nk')} ,  \\ F_{Y}(y) &= \frac{\Upsilon\left(Nk', \frac{\sqrt{y}}{\theta'}\right)}{\Gamma(Nk')},    
\end{align}
 where \(\Upsilon(\cdot, \cdot)\) is the lower incomplete gamma function with the shape parameter $k' = \frac{N\pi^2}{16-\pi^2}$ and the scale parameter $\theta' = \frac{\left(16-\pi^2\right) \sqrt{\bar{\gamma}_y \delta_1 \delta_2}}{4\pi}$, in which $\delta_1$ and $\delta_2$ denote the mean of the channel gains between TT-to-RIS and RIS-to-LT, respectively. Besides, \(\Gamma(\cdot)\) denotes the gamma function. Therefore, since the mean and variance of the gamma distribution are defined as \(k'\theta'\) and \(k'\theta'^2\), $\mathbb{E}[Y]$ and $\mathbb{V}[Y]$ can be respectively shown as follows: 
 \begin{align}
    \mathbb{E}[Y] &= \frac{N \pi \sqrt{\bar{\gamma}_y \delta_1 \delta_2}}{4}, \\
     \mathbb{V}[Y] &= \frac{N \bar{\gamma}_y \delta_1 \delta_2 \left(16-\pi^2\right)}{16}.
 \end{align}
Now, by defining $\gamma_\text{y}=Y_1 Y$, and using \eqref{mean-pro-fromula} and \eqref{var-pro-fromula}, $\gamma_\text{y}$ has the following mean and variance, respectively
\begin{align} \label{E-gamma-y}
    \mathbb{E}[\gamma_\text{y}] &= \frac{\alpha N \pi \sqrt{\bar{\gamma}_y \delta_1 \delta_2}}{4}, \\
     \mathbb{V}[\gamma_\text{y}] &= \alpha^2 N \bar{\gamma}_y \delta_1 \delta_2 \left( 2+ \frac{(N-2)\pi^2}{16} \right) , \label{V-gamma-y}
 \end{align}
%
Finally, given that the received SNR at LT is the summation of two independent RVs as defined in \eqref{gamma_Reader_2}, the mean and variance of $\gamma_\mathrm{L}$ can be obtained as:
\begin{align} \label{mean-gamma-L}
\mathbb{E}[\gamma_\mathrm{L}] &= \alpha \left(\bar{\gamma}_x \beta + \frac{ N \pi \sqrt{\bar{\gamma}_y \delta_1 \delta_2}}{4} \right), \\
\mathbb{V}(\gamma_\mathrm{L}) &= \alpha^2 \left(3{\bar{\gamma}}^2_x \beta^2 + N \bar{\gamma}_y \delta_1 \delta_2 \left(2+ \frac{ (N-2)\pi^2 }{16} \right) \right). \label{var-gamma-L}
\end{align}

\subsection{Outage Probability}
OP is a pivotal performance metric in communication systems, especially those operating over fading channels. It quantifies the likelihood that the channel's capacity falls below a certain threshold information rate, \(R_t\), necessary for satisfactory communication. Mathematically, if \(C\) represents the instantaneous channel capacity, OP can be defined as:
\begin{align} \label{op-def1}
P_{\text{out}} = \Pr(C \leq R_\text{t}),
\end{align}
where the threshold \(R_\text{t}\) is a positive value that typically corresponds to the minimum rate needed to achieve a certain quality of service or to fulfill specific application requirements. This definition implies that OP measures the probability that the channel cannot support a specified minimum rate, \(R_\text{t}\), required for adequate transmission quality. 
\begin{theorem}
The OP in the studied RIS-aided BTTN under Rayleigh fading channels follows a gamma distribution as:
\begin{align} \label{OP}
    P_{\mathrm{out}} &= \frac{\Upsilon\left(k, \frac{\gamma_\mathrm{th}}{\theta}\right)}{\Gamma(k)},
\end{align}
where $\gamma_\mathrm{th}=2^{R_\text{t}}-1$ denotes the SNR threshold. In addition, the shape parameter $k$ and the scale parameter $\theta$ are respectively given by 
\begin{align} \label{cdf-k}
 k &= \frac{\left(4\bar{\gamma}_x \beta + N \pi \sqrt{\bar{\gamma}_y \delta_1 \delta_2}\right)^2} {16 \left(3 \bar{\gamma}_x^2 \beta^2 + N \bar{\gamma}_y \delta_1 \delta_2 \left(2+ \frac{ (N-2)\pi^2 }{16} \right) \right)},\\
 \label{cdf-theta}
\theta &= \frac{4\alpha \left(3 \bar{\gamma}_x^2 \beta^2 + N \bar{\gamma}_y \delta_1 \delta_2 \left(2+ \frac{ (N-2)\pi^2 }{16} \right) \right) }
{4\bar{\gamma}_x \beta + N \pi \sqrt{\bar{\gamma}_y \delta_1 \delta_2}}.
\end{align}
\end{theorem}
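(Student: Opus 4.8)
The plan is to apply the moment matching technique already set up in the statistical analysis section. Since the received SNR $\gamma_\mathrm{L}$ is a sum of two positive random variables whose first two moments were computed in \eqref{mean-gamma-L} and \eqref{var-gamma-L}, and since the dominant RIS-aided term $Y$ is already gamma-distributed (by the central-limit-style argument in the excerpt, with $Nk'$ large), it is natural to approximate the full $\gamma_\mathrm{L}$ by a single gamma random variable. First I would posit $\gamma_\mathrm{L}\sim\mathrm{Gamma}(k,\theta)$ and fix $k$ and $\theta$ by equating the gamma mean $k\theta$ to $\mathbb{E}[\gamma_\mathrm{L}]$ and the gamma variance $k\theta^2$ to $\mathbb{V}(\gamma_\mathrm{L})$. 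Solving this $2\times 2$ system gives $k=\mathbb{E}[\gamma_\mathrm{L}]^2/\mathbb{V}(\gamma_\mathrm{L})$ and $\theta=\mathbb{V}(\gamma_\mathrm{L})/\mathbb{E}[\gamma_\mathrm{L}]$.

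Next I would substitute the explicit expressions \eqref{mean-gamma-L} for $\mathbb{E}[\gamma_\mathrm{L}]$ and \eqref{var-gamma-L} for $\mathbb{V}(\gamma_\mathrm{L})$ into these two ratios. For $k$, the factor $\alpha$ in the mean is squared in the numerator and the factor $\alpha^2$ in the variance cancels it in the denominator, leaving the $\alpha$-free expression \eqref{cdf-k}; pulling a factor of $4$ out of the mean to clear the $\tfrac14$ accounts for the $16$ in the denominator. For $\theta$, one factor of $\alpha^2/\alpha=\alpha$ survives, producing \eqref{cdf-theta} after the same cosmetic rearrangement of the $\tfrac14$ into an explicit $4$. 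Then, given $\gamma_\mathrm{L}\sim\mathrm{Gamma}(k,\theta)$, its CDF is $F_{\gamma_\mathrm{L}}(\gamma)=\Upsilon(k,\gamma/\theta)/\Gamma(k)$, a standard fact I may quote. Finally I would unwind the outage definition: starting from \eqref{op-def1} with $C=\log_2(1+\gamma_\mathrm{L})$, the event $C\le R_\mathrm{t}$ is equivalent to $\gamma_\mathrm{L}\le 2^{R_\mathrm{t}}-1=:\gamma_\mathrm{th}$, so $P_\mathrm{out}=F_{\gamma_\mathrm{L}}(\gamma_\mathrm{th})=\Upsilon(k,\gamma_\mathrm{th}/\theta)/\Gamma(k)$, which is exactly \eqref{OP}.

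The main obstacle is not algebraic but justificatory: explaining why a single gamma law is an adequate model for $\gamma_\mathrm{L}=\gamma_\mathrm{x}+\gamma_\mathrm{y}$. The summand $\gamma_\mathrm{x}=\bar\gamma_\mathrm{x}X_1X_2$ is a product of exponentials (a heavier-tailed, non-gamma object), and $\gamma_\mathrm{y}=Y_1Y$ multiplies an exponential $Y_1$ by the gamma variable $Y$, so neither piece is exactly gamma and their sum certainly is not. I would address this by invoking the moment-matching rationale already used earlier in the paper — matching the first two moments captures the bulk of the distribution, the gamma family is closed enough in shape to interpolate between the near-exponential regime (small $N$) and the near-Gaussian regime (large $N$, where $Y$ dominates and $Nk'$ grows), and the Monte Carlo validation in Section~\ref{sec_sim} confirms the fit. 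I would state explicitly that \eqref{OP} is therefore an accurate approximation rather than an identity, and note that the $\alpha$-cancellation in $k$ is a reassuring structural check: the shape of the SNR distribution should not depend on the common scale $\alpha=\mathbb{E}[|h_\mathrm{ST}|^2]$, only its spread should, and indeed only $\theta$ retains $\alpha$.
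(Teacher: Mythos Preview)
Your proposal is correct and follows essentially the same route as the paper: approximate $\gamma_\mathrm{L}$ by a Gamma law via moment matching, solve $k\theta=\mathbb{E}[\gamma_\mathrm{L}]$ and $k\theta^2=\mathbb{V}(\gamma_\mathrm{L})$ using \eqref{mean-gamma-L}--\eqref{var-gamma-L} to obtain \eqref{cdf-k}--\eqref{cdf-theta}, and then read off $P_\mathrm{out}=F_{\gamma_\mathrm{L}}(\gamma_\mathrm{th})$. Your additional remarks on the $\alpha$-cancellation and on why the gamma fit is only an approximation are more explicit than the paper's proof but entirely consistent with it.
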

\begin{proof}
    First, we utilize the moment matching technique to approximate the corresponding PDF and CDF of the received SNR at LT \cite{Mom-Match,FAS-BC-ISAC}. The gamma distribution, characterized by just the two adjustable shape and scale parameters is frequently used to approximate complex distributions due to its simplicity. Consequently, the mean and variance of the approximate Gamma distribution are defined as \(k\theta\) and \(k\theta^2\), respectively. Thus, having established the mean and variance of \(\gamma_{\text{L}}\) as shown in \eqref{mean-gamma-L} and \eqref{var-gamma-L}, we can derive \(k\) and \(\theta\) as demonstrated in \eqref{cdf-theta} and \eqref{cdf-k}. This allows us to approximate \(f_{\gamma_{\text{L}}}(\gamma_{\text{L}})\) and \(F_{\gamma_{\text{L}}}(\gamma_{\text{L}})\) as detailed in \eqref{snr-pdf-gamma} and \eqref{snr-cdf-gamma}, respectively. 
\begin{align} \label{snr-pdf-gamma}
f_{\gamma_\mathrm{L}}(\gamma_\mathrm{L}) &= \frac{{\gamma_\mathrm{L}}^{k-1} \mathrm{e}^{-\frac{\gamma_\mathrm{L}}{\theta}}}{\Gamma(k) \theta^{k}} , \\
\label{snr-cdf-gamma}
    F_{\gamma_\mathrm{L}}(\gamma_\mathrm{L}) &= \frac{\Upsilon\left(k, \frac{\gamma_\mathrm{L}}{\theta}\right)}{\Gamma(k)},
\end{align}
On the other hand, the definition of OP in \eqref{op-def1} leads to the following  equation:
\begin{align}
P_{\text{out}} = F_{\gamma_\text{L}}(\gamma_{\text{th}}).
\end{align}
Now, by having the CDF of $\gamma_\text{L}$, the OP can be obtained as \eqref{OP}, thereby, completing the proof. 
\end{proof}

\subsection{Bit Error Rate}
In evaluating the reliability of the RIS-aided BTTNs, BER serves as a crucial metric, which quantifies the rate at which errors occur in a communication channel and is fundamentally influenced by the SNR. To rigorously assess the BER in our proposed system model, we derive an expression that integrates over $f_{\gamma_\mathrm{L}}(\gamma_\mathrm{L})$. This approach allows us to compute the average BER across varying channel conditions, providing a comprehensive measure of system performance. The average BER, defined for a gamma-distributed SNR with predefined parameters \(k\) and \(\theta\), is given by:
\begin{align} \label{ber-int-1}
BER &= \int_0^\infty Q \left(\sqrt{2\gamma_\mathrm{L}}\right) f_{\gamma_\mathrm{L}}(\gamma_\mathrm{L}) \, \text{d}\gamma_\mathrm{L}
\end{align}
where \(Q(\cdot)\) is the Q-function, representing the tail probability of the standard normal distribution. This formulation provides a detailed insight into the impact of SNR on the error performance, highlighting how RIS can mitigate errors by enhancing the effective SNR at LT.
\begin{theorem} \label{ThmBER}
    The BER in the studied RIS-aided BTTN under amplitude shift keying (ASK) modulation can be expressed as:
    \begin{align} \label{ber}
&BER \hspace{0pt}= \hspace{0pt} \frac{1}{ 2\sqrt{\pi} \ \Gamma(k) \ln{2}} \
 G^{2,1}_{2,2}\left( \hspace{-4pt} \begin{array}{c}
				\theta 
    \end{array} \hspace{-2pt}
			\Big\vert \hspace{-2pt} \begin{array}{c}
				1-k,1 \\
				0, \frac{1}{2} \\
			\end{array} \hspace{-4pt} \right), 
\end{align}
where $G^{m,n}_{p,q}(  . )$ denotes the Meiger's G function, and \( k \) and \( \theta \) are given by \eqref{cdf-theta} and \eqref{cdf-k}, respectively.
\end{theorem}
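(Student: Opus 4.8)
The plan is to evaluate the integral in \eqref{ber-int-1} in closed form by writing both the $Q$-function and the exponential factor of the gamma PDF \eqref{snr-pdf-gamma} as Meijer-G functions, and then invoking the Mellin-convolution identity for the integral of a product of two Meijer-G functions.

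First, I would substitute \eqref{snr-pdf-gamma} into \eqref{ber-int-1} to obtain
\begin{align}
BER = \frac{1}{\Gamma(k)\,\theta^{k}} \int_0^\infty \gamma_\mathrm{L}^{k-1}\, \mathrm{e}^{-\gamma_\mathrm{L}/\theta}\, Q\!\left(\sqrt{2\gamma_\mathrm{L}}\right)\, \mathrm{d}\gamma_\mathrm{L}.
\end{align}
Next, I would use $Q(\sqrt{2x})=\tfrac12\,\mathrm{erfc}(\sqrt{x})$ together with the standard representation of $\mathrm{erfc}(\sqrt{x})$ as a $G^{2,0}_{1,2}$ function with parameter rows $[\,1\,;\,0,\tfrac12\,]$, and also write $\mathrm{e}^{-\gamma_\mathrm{L}/\theta}$ as a $G^{1,0}_{0,1}$ function of $\gamma_\mathrm{L}/\theta$. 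The integrand then reduces to $\gamma_\mathrm{L}^{k-1}$ times a product of two Meijer-G functions whose arguments are proportional to $\gamma_\mathrm{L}$.

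Then I would apply the tabulated integral for the Mellin convolution of two $G$-functions (e.g. Gradshteyn--Ryzhik 7.811, or the identity routinely used in RIS performance analyses), which expresses $\int_0^\infty x^{s-1}\,G(ax\,|\cdot)\,G(bx\,|\cdot)\,\mathrm{d}x$ as a single Meijer-G function of the argument ratio. Taking $s=k$, $a=1/\theta$, $b=1$, the parameter algebra should collapse the result into a $G^{2,1}_{2,2}$ function with upper parameters $\{1-k,\,1\}$, lower parameters $\{0,\,\tfrac12\}$, and argument $\theta$; combining this with the prefactor $\tfrac{1}{2\sqrt{\pi}\,\Gamma(k)\,\theta^{k}}$ (the $\tfrac12$ from $Q\to\mathrm{erfc}$, the $\tfrac{1}{\sqrt\pi}$ from the $\mathrm{erfc}$ representation), the $\theta^{k}$ produced by the integral, and the ASK-specific normalization constant (which accounts for the $1/\ln 2$ factor in the statement), yields \eqref{ber}.

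The delicate part will be the bookkeeping of the Meijer-G parameter rows in the convolution formula: that identity reflects and shifts the parameter lists, so the position of $1-k$ in the numerator, the correct argument ($\theta$ versus $1/\theta$), and the orders $(m,n,p,q)=(2,1,2,2)$ all have to be tracked precisely. I would sanity-check the result by reducing it to the classical closed-form Rayleigh BER when $k=1$ and by verifying the Mellin--Barnes convergence conditions that justify interchanging the order of integration.
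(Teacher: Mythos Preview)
Your proposal is correct and follows essentially the same route as the paper: rewrite $Q(\sqrt{2\gamma_\mathrm{L}})=\tfrac12\,\mathrm{erfc}(\sqrt{\gamma_\mathrm{L}})$ as a $G^{2,0}_{1,2}$ function, substitute the gamma PDF \eqref{snr-pdf-gamma}, and collapse the resulting integral via a tabulated Meijer-$G$ identity. The only cosmetic difference is that the paper keeps $\mathrm{e}^{-\gamma_\mathrm{L}/\theta}$ explicit and applies the formula $\int_0^\infty x^{s-1}\mathrm{e}^{-\sigma x}G(\omega x\,|\,\cdot)\,\mathrm{d}x$ directly (Prudnikov \emph{et al.}, Eq.~2.24.3.1) rather than first recasting the exponential as a $G^{1,0}_{0,1}$ and invoking the two-$G$ convolution you describe---these two computations are equivalent.
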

\begin{proof}
By having $Q(x) = \frac{1}{2} erfc \left(\frac{x}{\sqrt{2}} \right)$ using 
    the Meijer's G-function demonstration of $erfc(.)$ function as shown in \cite[Eq. 8.4.14.2]{ref59}, and inserting them together with \eqref{snr-pdf-gamma} into \eqref{ber-int-1}, we will have \eqref{ber-int-1} as:
    \begin{align} \label{ber-int-2}
BER \hspace{-2pt}=\hspace{-2pt} \frac{1}{ 2\sqrt{\pi} \Gamma(k) \theta^k \ln{2}}\int_{0}^{\infty} \hspace{-4pt} {\gamma_\mathrm{L}}^{k-1} \hspace{0pt} \mathrm{e}^{-\frac{\gamma_\mathrm{L}}{\theta} } 
 G_{1,2}^{2,0} \hspace{-2pt} \left( \hspace{-7pt} \begin{array}{c}
				\gamma_\mathrm{L} \end{array} \hspace{-4pt}
			\Big\vert \hspace{-4pt} \begin{array}{c}
				1\\
				0, \frac{1}{2}\\
			\end{array} \hspace{-6pt} \right) \hspace{-2pt}\mathrm{d}{\gamma_\mathrm{L}},        
    \end{align}
By using \cite[Eq. 2.24.3.1]{ref59}, we can obtain \eqref{ber-int-2} as \eqref{ber}, and thus, the proof is completed for Thm. \ref{ThmBER}.
\end{proof}

\subsection{Average Capacity}
AC in communication systems quantifies the mean achievable information rate over a channel subject to varying conditions, notably those influenced by fading characteristics. Unlike OP, which focuses on the likelihood of failing to meet a threshold information rate, AC reflects the expected performance of the channel across all potential state variations \cite{Ghadi123}. AC in a fading environment can be calculated from the PDF of the SNR by using an integral formula. This calculation typically assumes logarithmic utility (i.e., Shannon capacity) for the channel can be formulated as:
\begin{align} \label{Cbar-int1}
\bar{C} = \int_0^\infty \log_2(1 + \gamma_\mathrm{L}) \ f_{\gamma_\mathrm{L}}(\gamma_\mathrm{L}) \  \mathrm{d}\gamma_\mathrm{L}.
\end{align}
\begin{theorem} \label{ThmAC}
The AC for the studied RIS-aided BTTN under Rayleigh fading channels can be expressed as:
    \begin{align} \label{Cbar}
&\bar{C} \hspace{0pt}= \hspace{0pt} \frac{1}{ \Gamma(k) \ln{2}} \
 G^{1,3}_{3,2}\left( \hspace{-4pt} \begin{array}{c}
				\theta \end{array} \hspace{-2pt}
			\Big\vert \hspace{-2pt} \begin{array}{c}
				1-k,1,1\\
				1,0\\
			\end{array} \hspace{-4pt} \right), 
\end{align}
where $G^{m,n}_{p,q}(  . )$ denotes the Meiger's G function, and \( k \) and \( \theta \) are given by \eqref{cdf-theta} and \eqref{cdf-k}, respectively.
\end{theorem}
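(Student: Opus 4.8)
The plan is to mirror the derivation of Theorem \ref{ThmBER}: represent the Shannon kernel $\log_2(1+\gamma_\mathrm{L})$ as a Meijer $G$-function, combine it with the Gamma PDF \eqref{snr-pdf-gamma}, and evaluate the resulting Laplace-type integral of a power times a $G$-function in closed form. This keeps the argument of the final $G$-function equal to $\theta$, exactly as it appears in \eqref{Cbar}.

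First I would substitute \eqref{snr-pdf-gamma} into \eqref{Cbar-int1} and factor out the constant $\frac{1}{\Gamma(k)\,\theta^{k}\ln 2}$, leaving the core integral $\int_0^\infty \gamma_\mathrm{L}^{k-1}\, \mathrm{e}^{-\gamma_\mathrm{L}/\theta}\, \ln(1+\gamma_\mathrm{L})\,\mathrm{d}\gamma_\mathrm{L}$. Then I would use the standard identity $\ln(1+x) = G^{1,2}_{2,2}\!\left( x \,\Big\vert\, \begin{smallmatrix} 1,1 \\ 1,0 \end{smallmatrix}\right)$ (see \cite{ref59}), so that the integrand becomes a power term times an exponential times a single Meijer $G$-function, which is precisely the form handled by \cite[Eq. 2.24.3.1]{ref59}.

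Applying \cite[Eq. 2.24.3.1]{ref59} with exponent $\alpha = k$, Laplace parameter $p = 1/\theta$, and $G$-argument $\omega = 1$ collapses the integral into a single $G$-function of the ratio $\omega/p = \theta$, multiplied by a factor $p^{-\alpha} = \theta^{k}$ that cancels the $\theta^{k}$ in the prefactor; the rows $\{1,1\}/\{1,0\}$ of the $\ln(1+\cdot)$ representation get augmented by the $1-k$ contributed by the power term, producing the $G^{1,3}_{3,2}$ with upper list $\{1-k,1,1\}$ and lower list $\{1,0\}$ displayed in \eqref{Cbar}.

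The step I expect to be the main obstacle is the parameter bookkeeping inside \cite[Eq. 2.24.3.1]{ref59} — correctly threading $k-1$, the scale $1/\theta$, and the two parameter rows of the $\ln(1+\cdot)$ representation through the reduction formula, and confirming that its convergence and contour hypotheses hold over the admissible ranges of $k$ and $\theta$ so that the closed form is legitimate. The remaining manipulations are routine, and since replacing $\log_2(1+\gamma_\mathrm{L})$ by $Q(\sqrt{2\gamma_\mathrm{L}})$ in this same template reproduces \eqref{ber}, the present derivation is structurally identical to that of Theorem \ref{ThmBER} and can reuse the same consistency checks.
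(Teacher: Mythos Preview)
Your proposal is correct and follows essentially the same approach as the paper: represent $\ln(1+\gamma_\mathrm{L})$ via \cite[Eq.~8.4.6.5]{ref59} as $G^{1,2}_{2,2}\!\left(\gamma_\mathrm{L}\,\big\vert\,\begin{smallmatrix}1,1\\1,0\end{smallmatrix}\right)$, insert the Gamma PDF \eqref{snr-pdf-gamma} into \eqref{Cbar-int1}, and evaluate the resulting Laplace-type integral with \cite[Eq.~2.24.3.1]{ref59} to obtain \eqref{Cbar}. Your parameter bookkeeping (the $\theta^{k}$ cancellation and the $1-k$ augmenting the upper row) is spelled out in more detail than the paper's proof, but the route is identical.
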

\begin{proof}
    Followed by the AC formula and having the Meijer's G-function demonstration of the logarithm function as shown in \cite[Eq. 8.4.6.5]{ref59}, we insert \eqref{snr-pdf-gamma} into \eqref{Cbar-int1}. Thus, we can re-write \eqref{Cbar-int1} as \eqref{Cbar_int}.
    as:
\begin{align} \label{Cbar_int}
&\bar{C} \hspace{0pt}= \hspace{0pt} \frac{1}{ \Gamma(k) \theta^k \ln{2}}\int_{0}^{\infty} {\gamma_\mathrm{L}}^{k-1} \hspace{2pt} \mathrm{e}^{-\frac{\gamma_\mathrm{L}}{\theta} } 
 G_{2,2}^{1,2}\left( \hspace{-4pt} \begin{array}{c}
				\gamma_\mathrm{L} \end{array} \hspace{-2pt}
			\Big\vert \hspace{-2pt} \begin{array}{c}
				1,1\\
				1,0\\
			\end{array} \hspace{-4pt} \right) \mathrm{d}{\gamma_\mathrm{L}}, 
\end{align}
By using \cite[Eq. 2.24.3.1]{ref59}, we can obtain \eqref{Cbar_int} as \eqref{Cbar}, and thus, the proof is completed for Thm. \ref{ThmAC}.
\end{proof}
%
%

\section{ Results and Discussions} \label{sec_sim}
In this section, we conduct extensive simulations to assess the performance RIS-aided BTTNs based on different system parameters. We also use Monte-Carlo simulation technique to verify the analytical expressions for OP, BER, and AC. 
\subsection{Simulation Setup}
In this section, we describe a RIS-aided BTTN system composed of passive talker and listener tags with extremely limited computational and energy capabilities, RF sources to power the tags (which may be ambient or dedicated), and a RIS equipped with $N$ reflective elements. We focus on an indoor setting where all components are stationary \cite{Indoor_RIS_02}, with distances set as $d_\mathrm{ST}= d_\mathrm{TR}= d_\mathrm{RL}= 1$m and $d_\mathrm{RL}=1.5$m. The system parameters include a rate threshold of $R_\text{t}=2$ bps/Hz, noise at the LT of $\sigma^2_L=-50$ dBm, a transmitted power from the RF source of $P_s=1$ dBm, and a path-loss factor of $\chi =3.5$ \cite{BCAuth}. Additionally, the MATLAB 2022a platform was utilized to develop our analytical frameworks and conduct Monte Carlo simulations for validation.

\subsection{Simulation Results}

\begin{figure}[t]
    \centering    \includegraphics[width=0.36\textwidth]{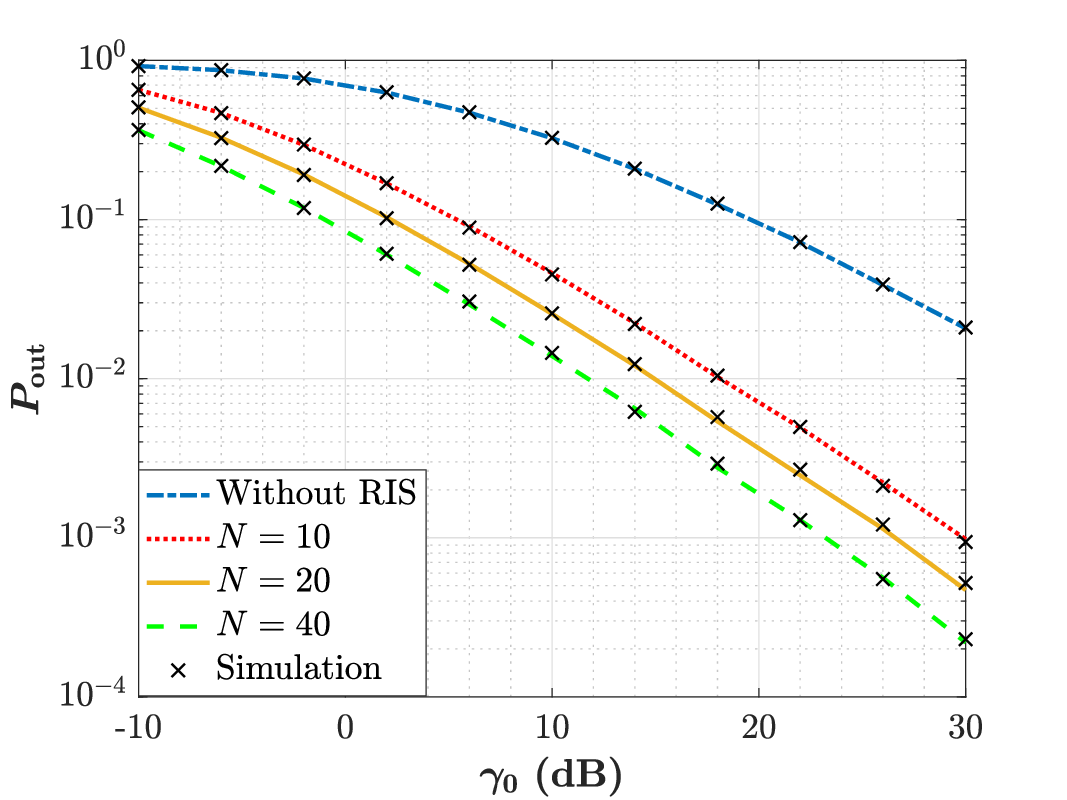}\vspace{-0.3cm}
    \caption{OP vs average SNR for different number of RIS reflecting elements.}
    \label{fig:OP_N}
\end{figure}
The impact of RIS on OP within BTTNs is rigorously analyzed under varied average SNR levels, as shown in Fig. \ref{fig:OP_N}. One can also observe that the utilization of RIS markedly improves communication reliability across all observed SNR regimes compared to the \textit{Without RIS} scenario. The performance enhancement is particularly notable with an increase in the number of RIS elements, highlighting the critical role of RIS in overcoming adverse channel conditions and boosting signal diversity within BTTNs. As more RIS elements are employed, there is a significant reduction in OP, effectively demonstrating how smart propagation environment reconfiguration can bolster wireless communication robustness in BTTNs. This positive trend is consistently supported by the tight correlation between the simulation results and theoretical predictions, achieved through the Monte Carlo simulation technique, thereby providing robust validation for the analytical models proposed for evaluating OP in BTTNs. 
\begin{figure}[t]
    \centering    \includegraphics[width=0.36\textwidth]{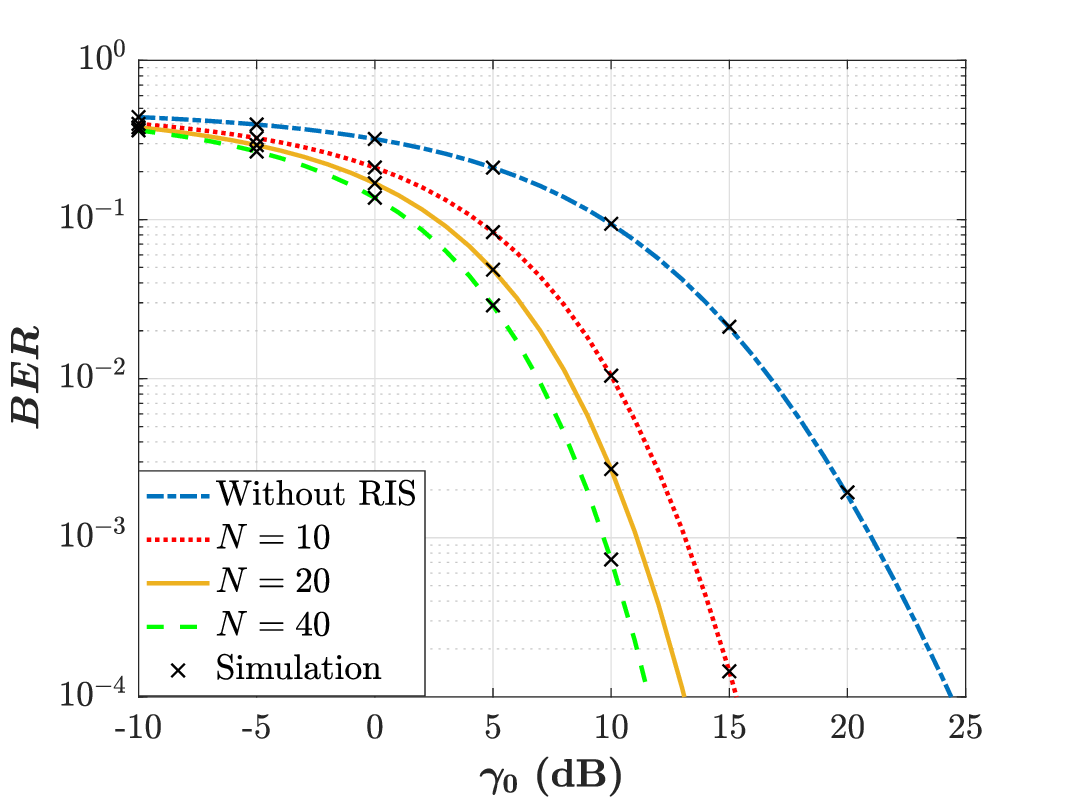}\vspace{-0.3cm}
    \caption{BER vs average SNR for different number of RIS reflecting elements.}
    \label{fig:BER_N}
\end{figure}
Fig. \ref{fig:BER_N} illustrates the impact of integrating RIS into BTTNs on the BER as a function of the average SNR for different RIS element configurations. The number of RIS elements, ranging from 10 to 40, shows a marked improvement in BER performance at all SNR levels. Notably, the presence of RIS significantly reduces the BER, particularly at higher SNR levels, compared to \textit{Without RIS} scenario. Additionally, the close alignment of the simulation points with the theoretical curves confirms the accuracy of the models used to predict BER in RIS-aided BTTNs.
\begin{figure}[t]
    \centering    \includegraphics[width=0.36\textwidth]{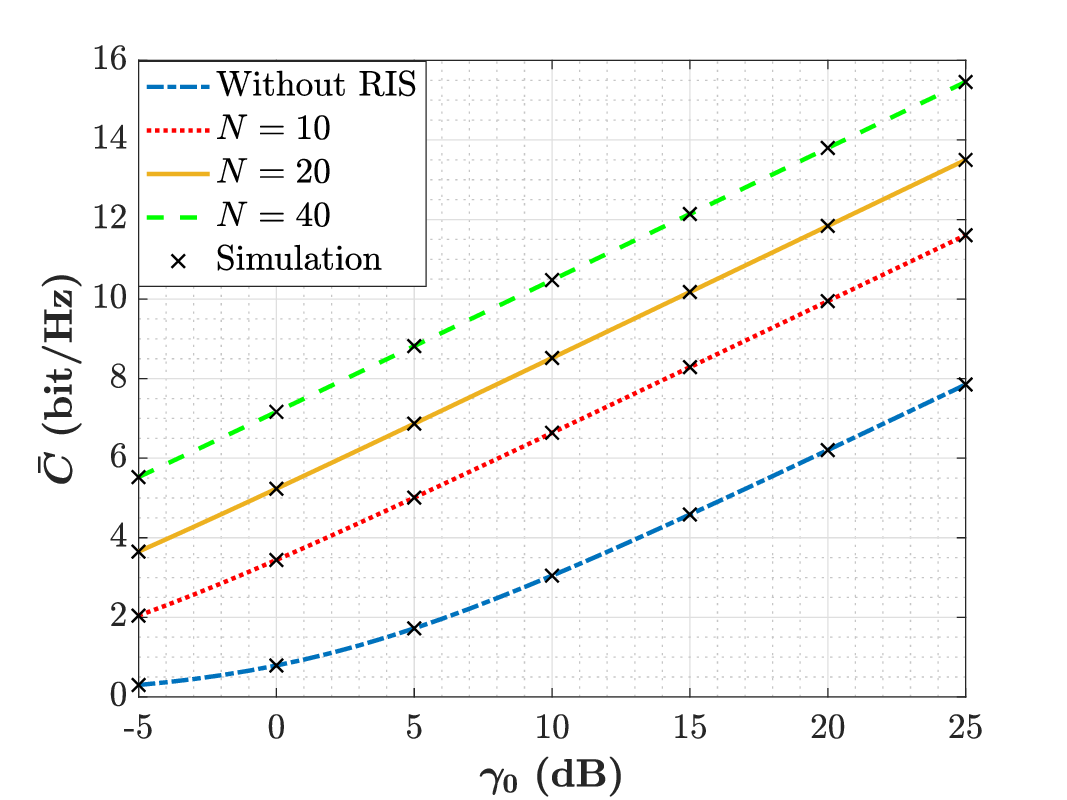}\vspace{-0.3cm}
    \caption{AC vs average SNR for different number of RIS reflecting elements.}
    \label{fig:AC_N}
\end{figure}
Fig. \ref{fig:AC_N} illustrates the enhancement of average channel capacity, denoted as $\bar{C}$ as a function of the average SNR, across various configurations within the studied RIS-aided BTTN. The data reveal a pronounced increase in channel capacity when RIS is integrated into the system. This trend is increasingly beneficial as the number of RIS elements escalates, with the configuration involving $N = 40$ achieving a capacity exceeding $9 \ \mathrm{bit/Hz}$ at an average SNR of approximately $5$ dB. This enhancement underscores the effectiveness of RIS in mitigating adverse channel conditions and boosting the overall spectral efficiency of BTTN communication systems. The simulation points exhibit close alignment with the theoretical curves, lending credence to the robustness and accuracy of the analytical models deployed to derive the closed-form expressions of AC.
\begin{figure}[t]
    \centering    \includegraphics[width=0.36\textwidth]{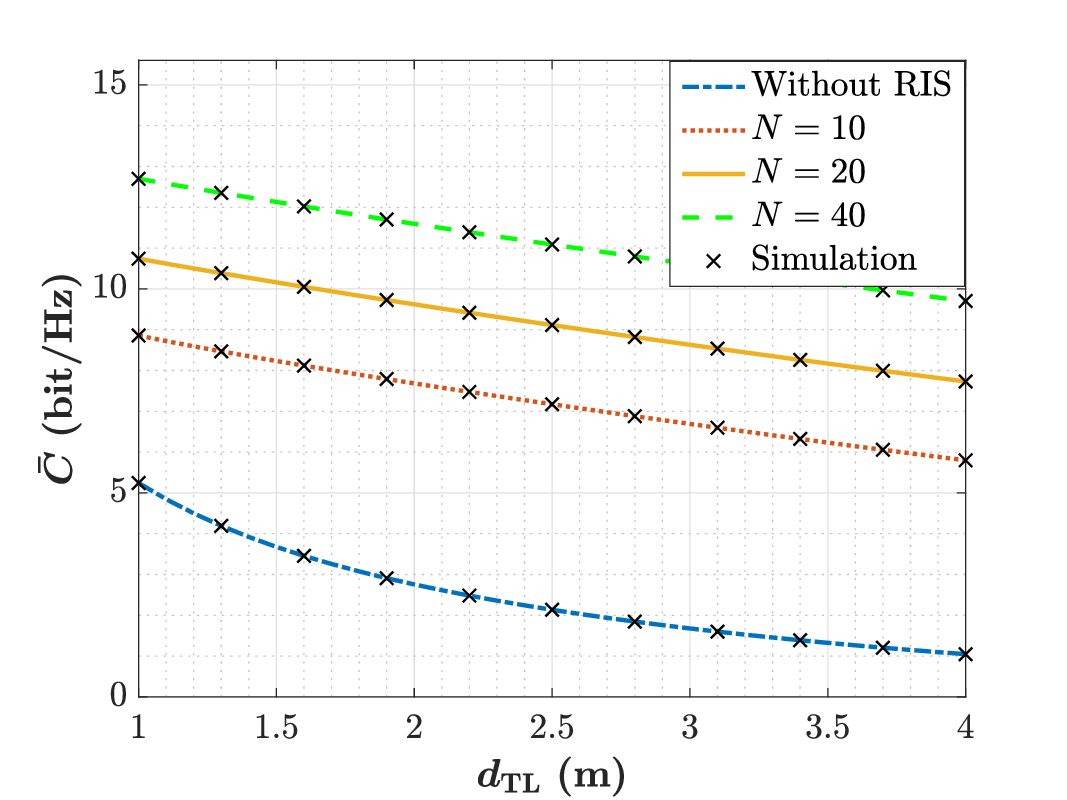}\vspace{-0.3cm}
    \caption{AC vs $d_\text{TL}$ for different number of RIS reflecting elements.}
    \label{fig:AC_d_TL}
\end{figure}
The impact of the distance between the transmitting and listening tags, TT and LT, on the average channel capacity within RIS-aided BTTNs is explored in Fig. \ref{fig:AC_d_TL}. The imapct of different configurations (with and without RIS) is demonstrated, emphasizing the RIS's role in enhancing communication capabilities across varying distances. Notably, with an increased number of RIS elements, there is a substantial improvement in $\bar{C}$, especially noticeable at longer distances, highlighting the efficacy of RIS in maintaining robust communication in BTTNs. 
For instance, utilizing $40$ RIS elements enables maintaining an average capacity of $10 \ \mathrm{bit/Hz}$, a commendable achievement for extended distances typical in BTTN environments where the usual interaction range between TT and LT seldom surpasses $1 \text{m}$. Additionally, the rate of capacity reduction as the distance $d_{\mathrm{TL}}$ increases becomes significantly milder with an enhanced count of RIS elements, indicating their efficacy in mitigating the detrimental effects of longer distances. Illustratively, elevating the distance from $1\ \text{m}$ to $4\ \text{m}$ leads to an over $80\%$ drop in the average data rate in \textit{Without RIS} setups. In contrast, this reduction is limited to about $23\%$ when employing an RIS configuration with $40$ elements for the same distance.
These findings underscore the significant potential of RIS technology to extend the operational range of BTTNs while securing higher data rates, thereby supporting the deployment of RIS in low-power environments where direct transmission paths suffer from severe fading and other deleterious effects.

\section{Conclusion } \label{sec_conclusion}
This paper introduced the integration of an indoor RIS into BTTN systems and studied their overall performance. Initially, the research deduced closed-form expressions for the received SNR at LT by employing a moment-matching technique. This approach facilitated the development of simple yet profoundly effective analytical frameworks, capturing the intricate dynamics inherent to this complex system configuration.
Subsequently, to evaluate the impact of RIS on AC, OP, and BTTN's coverage area, compact analytical formulations for both OP and AC were derived, reflecting the variations in system parameters. The findings indicated that the implementation of RIS substantially augments the performance of BTTNs compared to conventional setups lacking RIS enhancements.
Ultimately, Monte Carlo simulations were conducted to corroborate the analytical outcomes, confirming the precision of the proposed analytical models for the RIS-aided BTTN systems.

\section*{Acknowledgment}
{This work is supported in part by the Academy of Finland under Grants 345072 and 350464 and the European Union's Horizon 2022 Research and Innovation Programme under Marie Skłodowska-Curie Grant
No. 101107993.}

\end{document}